\def\rset{\mathbb R}
\def\zset{\mathbb Z}
\def\eqsp{\;}
\newcommand{\pscal}[2]{\left\langle#1,#2\right\rangle}
\newcommand{\eqdef}{\ensuremath{\stackrel{\mathrm{def}}{=}}}
\def\B{\mathcal{B}} 
\def\e{\mathcal{E}}
\def\D{\mathcal{D}}
\newcommandx\sequence[3][2=t,3=\zset]
\def\PP{\mathbb{P}} 
\newcommand{\CPP}[3][]
{\ifthenelse{\equal{#1}{}}{{\mathbb P}\left(\left. #2 \, \right| #3 \right)}{{\mathbb P}_{#1}\left(\left. #2 \, \right | #3 \right)}}
\def\PE{\mathbb{E}} 
\newcommand{\CPE}[3][]
{\ifthenelse{\equal{#1}{}}{{\mathbb E}\left[\left. #2 \, \right| #3 \right]}{{\mathbb E}_{#1}\left[\left. #2 \, \right | #3 \right]}}
\def\tv{\mathrm{tv}}
\theoremstyle{plain}
\newtheorem{theorem}{Theorem}
\newtheorem{assumption}{H\hspace{-3pt}}
\newaliascnt{proposition}{theorem}
\newtheorem{proposition}[proposition]{Proposition}
\newaliascnt{lemma}{theorem}
\newtheorem{lemma}[lemma]{Lemma}
\newaliascnt{corollary}{theorem}
\theoremstyle{definition}
\newaliascnt{definition}{theorem}
\newtheorem{algorithm}{Algorithm}
\newaliascnt{remark}{theorem}
\newtheorem{remark}[remark]{Remark}
\newaliascnt{example}{theorem}
\def\rmd{\mathrm{d}}
\def\1{\mathbbm{1}}
\begin{document}

\title[On cyclical MCMC sampling]{On cyclical MCMC sampling}
\thanks{This work is partially funded by the NSF grant DMS-2210664. We are also grateful to Joonha Park for very helpful comments on an initial draft of this work.}

\author{Liwei Wang}\thanks{ L. Wang: Boston University, 111 Cummington Mall, Boston 02215 MA, United States. {\em E-mail address:} wlwfoo@bu.edu}
\author{Xinru Liu}\thanks{ X. Liu: Boston University, 111 Cummington Mall, Boston 02215 MA, United States. {\em E-mail address:} xinruliu@bu.edu}
\author{Aaron Smith}\thanks{ A. Smith: University of Ottawa, STEM Complex, 150 Louis-Pasteur, Ottawa, Ontario, Canada. K1N 6N5. {\em E-mail address:} asmi28@uOttawa.ca}
\author{Yves Atchad\'e}\thanks{ Y. Atchad\'e: Boston University, 111 Cummington Mall, Boston 02215 MA, United States. {\em E-mail address:} atchade@bu.edu}
  
\subjclass[2010]{62F15, 62Jxx}

\keywords{MCMC, Bayesian deep learning, Tempering, Markov chain convergence rate}

\maketitle

\begin{center} (Feb. 2024) \end{center}

\begin{abstract}
Cyclical MCMC is a novel MCMC framework recently proposed by \cite{zhang2019cyclical} to address the challenge posed by high-dimensional multimodal posterior distributions like those arising in deep learning. The algorithm works by generating a nonhomogeneous Markov chain that tracks -- cyclically in time -- tempered versions of the target distribution. We show in this work that cyclical MCMC converges to the desired probability distribution in settings where the Markov kernels used are fast mixing, and sufficiently long cycles are employed. However in the far more common settings of slow mixing kernels, the algorithm may fail to produce samples from the desired distribution. In particular, in a simple mixture example with unequal variance where powering is known to produce slow mixing kernels, we show  by simulation that cyclical MCMC fails to converge to the desired limit. Finally, we show that cyclical MCMC typically estimates well the local shape of the target distribution around each mode, even when we do not have convergence to the target.
\end{abstract}

\section{INTRODUCTION}
Over the last few decades, statistics and machine learning have become the dominant framework for scientific knowledge discovery and decision making from data. However uncertainty quantification remains one important aspect where further basic advancement is needed. 
In principle Bayesian statistics provides a coherent learning framework  where uncertainty can be rigorously quantified. However, deploying the Bayesian machinery in practice invariably hinges on the ability to handle high-dimensional and often multimodal posterior distributions. Markov chain Monte Carlo (MCMC) is the state-of-the-art for dealing with this problem (\cite{robertetcasella04,douc2018MarkovChains}). Despite several decades of progress in MCMC, sampling from multimodal distributions, particular in a high-dimensional context, remains extremely challenging. The state-of-the-art for dealing with multimodality is the idea of ``tempering" - that is, building a sequence of distributions that bridges the target distribution and some other distribution that is easier to sample. 
Simulated tempering (ST) and parallel tempering (PT) are the two main algorithms  built on this principle (\cite{geyer91pt,marinarietparisi92,hukushima:nemoto:1995,geyer:thompson95}). However to maintain correctness, these algorithms require additional auxiliary variables and Metropolis steps that significantly increase their implementation costs.

In \cite{zhang2019cyclical} the authors proposed ``cyclical stochastic gradient MCMC", a fast implementation of tempering, that dispenses with the costly auxiliary variables required in ST and PT. The algorithm implements tempering without any Metropolis step in the temperature dimension. 
The algorithm operates in cycles, and is designed such that in the initial part of a cycle there is an exploration of the space to find a mode, and in the second part of a cycle, samples are drawn from the mode found. In this paper we use the generic term ``cyclical MCMC" to refer to algorithms built on that principle. The purpose of this work is to analyze these algorithms.

We found that in general cyclical MCMC does not converge to the correct target distribution. For instance, we show by simulation that in a simple Gaussian mixture model with unequal variance, cyclical MCMC does not correctly recover the weights of the mixture. 

Using a novel adaptation of the spectral gap technique to nonhomogeneous Markov chains, we show that cyclical MCMC does converge to the intended target distribution when the Markov kernels used have fast enough mixing, and an appropriate tempering schedule is selected. We also found that even when the weights of the mixture are poorly approximated, cyclical MCMC typically produces a correct approximation of the shape of the distribution around each mode.

The remaining of the paper is organized as follows. We provide some motivating background in Section \ref{sec:background}. The cyclical MCMC algorithm is described in Section \ref{sec:cmcmc}. Our theoretical results are described in Section \ref{sec:theory}, with most proofs collected in Section \ref{sec:proof:thm:1} and in the supplement.

\section{MOTIVATION: BAYESIAN INFERENCE}\label{sec:background}
Suppose that we have data $\D\eqdef\{({\bf x}_i,{\bf y}_i),\;1\leq i\leq n\}$, where ${\bf y}_i\vert {\bf x}_i\sim f_\theta({\bf y}_i\vert {\bf x}_i)$,  for some statistical model $f_\theta$ with (vectorized) parameter $\theta\in\Theta\subseteq\rset^p$. Assuming independence, the log-likelihood writes
\[\ell(\theta;\D) \eqdef \sum_{i=1}^n \ell_i(\theta),\;\;\mbox{ where }\;\; \ell_i(\theta)\eqdef \log f_\theta({\bf y}_i\vert {\bf x}_i).\]
In the Bayesian paradigm we complement the model with a prior distribution $\mu_0$ on $\theta$ that summarizes all available prior knowledge (our inductive bias), such as boundedness, smoothness, sparsity, etc. The resulting posterior distribution of $\theta$ is the probability measure 
\begin{equation}\label{post:Pi}
\Pi(\rmd \theta\vert \D) \propto e^{\ell(\theta;\D)} \mu_0(\rmd\theta).\end{equation}
$\Pi(\cdot\vert\D)$ captures 
the uncertainty in the estimation of $\theta$. Given a new data point for which we observe ${\bf x}$, we can predict corresponding response ${\bf y}$ by sampling from the posterior predictive distribution
\begin{equation}\label{post:pred}
f(\cdot\vert {\bf x})\eqdef \int_{\rset^p} f_{\theta}(\cdot\vert {\bf x})\Pi(\rmd\theta\vert\D).\end{equation}
The uncertainty in a draw from $f$ is the combination of the uncertainty in learning the model, as reflected by the posterior distribution, plus the inherent uncertain of the model itself. Some authors (\cite{hullermeier2021aleatoric}) use the terms epistemic and aleatoric uncertainty respectively to refer to these two sources of uncertainties.

In practice, the posterior integral in (\ref{post:pred}) is rarely tractable, and so it is often replaced by an approximation 
\begin{equation}\label{post:pred:approx}
\hat{f}(\cdot\vert {\bf x}) \eqdef \frac{1}{K}\sum_{k=1}^K f_{\theta^{(k)}}(\cdot\vert {\bf x}),
\end{equation}
where $\{\theta^{(k)},\;1\leq k\leq K\}$ are approximately sampled from $\Pi(\cdot\vert \D)$. Taking $K=1$, and $\theta^{(1)}$ as the estimate obtained by running stochastic gradient descent (SGD) provides a very poor representation of the epistemic uncertainty. Other approaches such as variational approximation (\cite{graves:11,blei:etal:16}) or dropout (\cite{srivastava:14,gal:etal:16}) produce better approximations of $\Pi(\cdot\vert \D)$, but still, are known to misrepresent the epistemic uncertainty. At the other end of the spectrum, traditional MCMC methods produce (asymptotically) correct samples, but are computationally too expensive in many large applications.  There is therefore a pressing need for methods, such as cyclical MCMC, that aim to strike a better balance between cost and accuracy. However a good theoretical understanding of these algorithms is needed.

\section{CYCLICAL MCMC SAMPLING}\label{sec:cmcmc}
Let $\Theta\subseteq\rset^p$ be the state space. Suppose that we are interested in a probability density $\Pi$ (with respect to the Lebesgue measure) of the form
\begin{equation}\label{def:pi}
\Pi(\theta) =\frac{\exp\left(-\e(\theta)\right)}{Z},\;\;\;\theta\in\Theta,
\end{equation}
where $Z$ is the normalizing constant, and $\e:\Theta\to\rset$ some arbitrary measurable function. 
Let $\beta: [0,1]\to [0,1]$ be a continuously differentiable function with $\beta(0)=\beta(1)=1$. We also assume that $\beta$ is decreasing on $[0,1/2]$, and increasing on $[1/2,1]$. We extend $\beta$ into a function $\beta:\;[0,\infty)\to [0,1]$ by period extension (meaning that for all $x\geq 0$, $\beta(x+1) =\beta(x)$). For example, following  (\cite{zhang2019cyclical}) we consider in our simulations the choice\footnote{In the simulations we actually use $\max(\beta(t),0.001)$  to prevent $\beta(t)=0$ which may be problematic when $\Theta$ is unbounded.}
\begin{equation}
\label{beta:eq}
\beta(t) = \frac{1 + \cos(2\pi t^r)}{2},\;\;\;t\geq 0\end{equation}
for some power $r\geq 1$.

Let $L\geq 1$ be a cycle length. For integer $j\geq 1$, and with $\beta_j\eqdef \beta(j/L)$, we define the density
\begin{equation}\label{def:pi:j}
\Pi_j(\theta) =\frac{1}{Z_j} \exp\left(-\beta_j \e(\theta)\right),\;\;\;\theta\in\Theta.\end{equation}
Since $\beta$ is periodic with period $1$ and $\beta(0)=1$, it holds that $\Pi_L=\Pi_{2L} = \cdots \Pi_{kL}=\Pi$, for all $k\geq 1$. Furthermore, as $j$ increases from $1$ to $L/2$, the distribution $\Pi_j(\rmd \theta)$ becomes more diffuse, and its shape is restored back to $\Pi$ as  $j$ increases from $L/2$ to $L$. 

For each $j\geq 1$, let $M_j$ be a Markov kernel on $\Theta$ that can be used to sample from $\Pi_j$. In the theoretical investigation we will assume that $M_j$ has invariant distribution $\Pi_j$. In practice, Markov kernels that do not maintain $\Pi_j$ as  invariant distribution (such as stochastic gradient Langevin dynamics (SGLD)) are used, but we do not analyze these here. Given some initial distribution $\nu^{(0)}$, the idea of cyclical MCMC as developed in (\cite{zhang2019cyclical}) consists in simulating the nonhomogeneous  Markov chain $\{\theta^{(j)},\;j\geq 0\}$, where $\theta^{(0)}\sim \nu^{(0)}$, and for $j\geq 1$,
\begin{equation}\label{markov:prop}
\theta^{(j)} \;\vert\; \{\theta^{(0)},\ldots,\theta^{(j-1)}\}\;\sim M_j(\theta^{(j-1)},\cdot).\end{equation}
The chain is run for $K$ cycles (meaning for $K\times L$ iterations), and we collect the samples obtained at (or around) the end of each cycle to form $\{\theta^{(kL)},\;1\leq k\leq K\}$ as our sample representation of $\Pi$. A synoptic view of the algorithm is given in Algorithm \ref{algo:1}. 

\begin{remark}
The intuition of the method when $\Pi$ is multimodal is that,   as the kernel $M_j$ changes along the cycle and targets $\Pi_j$, by the middle of the cycle (since $\Pi_{1/2}$ is more diffuse), the sampler is able to escape more easily from any current (local) mode. However by the end of the cycle as $\beta_j$ moves closer to $1$, the algorithm  returns back to targeting $\Pi$. Hence, there is an exploration phase with mode discovery, followed by an exploitation phase where samples are drawn from around the selected mode.
\vspace{-0.7cm}
\begin{flushright} $\square$ \end{flushright}
\end{remark}
\medskip

\begin{remark}
The kernel $M_j$ can be constructed using any standard MCMC algorithm that is applicable (Gibbs sampling, Metropolis-Hastings, HMC, etc...). We note however, since the distributions $\Pi_j$ have markedly different covariance structures, for good performance it is important to properly scale the proposal kernels accordingly. In \cite{zhang2019cyclical} the authors mostly focused on SGLD (\cite{sgld,raginsky:etal:17}) and SGHMC (\cite{chen:fox:etal:14,ma:fox:15}).
\vspace{-0.7cm}
\begin{flushright} $\square$ \end{flushright}
\end{remark}
\medskip

\begin{algorithm}\label{algo:1}[Cyclical MCMC]$\hrulefill$\\
Choose the function $\beta:\;[0,1]\to[0,1]$, the cycle length $L$, the number of cycle $K$, the initial distribution $\nu^{(0)}$, and construct the sequence of nonhomogeneous Markov kernel $\{M_j,\;j\geq 1\}$.
\vspace{-0.3cm}

\begin{enumerate}
\item Draw $\theta^{(0)}\sim \nu^{(0)}$.
\item For $j=1,\ldots,K\times L$, draw 
\[\theta^{(j)} \;\vert\; \{\theta^{(0)},\ldots,\theta^{(j-1)}\}\;\sim M_j(\theta^{(j-1)},\cdot).\]
\item Return $\{\theta^{(kL)},\;1\leq k\leq K\}$.
\end{enumerate}
\vspace{-0.5cm}
$\hrulefill$

\end{algorithm}

\subsection{A toy example}\label{sec:mix:1}
Although the idea of cyclical MCMC is intuitively clear and appealing, the algorithm typically does not converge to the correct limit. For instance, consider a simple one-dimensional mixture density
\[\Pi(\theta) = \frac{1}{2}f_1(\theta) + \frac{1}{2}f_2(\theta).\]
where $f_1$ (resp. $f_2$) is the density of $\textbf{N}(5,1)$ (resp. $\textbf{N}(-5,c^2)$) where $c$ is either $c=1$ (equal variance mixture) or $c=0.1$ (unequal variance mixture). We apply Algorithm \ref{algo:1} to sample from $\Pi$ where $M_j$ is taken as a random walk Metropolis with proposal $\textbf{N}(x, 0.25/\beta_j)$. The initial distribution is $\textbf{N}(0,1)$. We use $K=1000$, $L=5000$, and $\beta$ as in (\ref{beta:eq}) with $r=1$. In Figure \ref{fig:ex:1}, the shaded areas on the left side depict the densities estimated from the cyclical MCMC outputs, whereas the solid black lines on the left side represent the true density. In the top-left plot, where $c=1$ (equal variance), the recovery is excellent and it is hard to distinguish the two curves. In the bottom-left where $c=0.1$ (unequal variance), the recovery is poor. In the unequal variance setting cyclical MCMC systematically fails to correctly estimate the weights of the mixture. In this specific example, the estimate of the weight of $f_1$ is $0.497$ in the equal variance setting -- which is excellent, but $0.87$ in the unequal variance setting. 

\begin{figure}
	\caption{Top left (resp. bottom left) is the density $\Pi$ and its estimate as produced by cyclical MCMC when $c=1$ (resp $c=0.1$). Top right (resp. bottom right) shows the powered density $\Pi^{0.001}$ with $c=1$ (resp $c=0.1$).}
	\includegraphics[width = 0.75\textwidth]{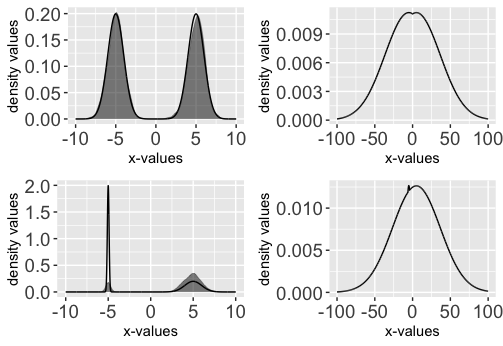}
	\label{alexnet_fashion}
\label{fig:ex:1}
\end{figure}

The issue in this example is that in the unequal variance setting, tempering by powering leads to a witch-hat distribution (\cite{matthew:93}), with a small bulge at the top (bottom-right side of Figure \ref{fig:ex:1}) that is virtually impossible to detect by most MCMC algorithms -- but is crucial to properly approximate the weights of the mixture. As a result the algorithm spends more time on the positive side of the line and produces an estimate  of the mixture weights that is biased towards $f_1$. In the equal variance setting, powering produces a nice smooth symmetric density (top-right side of Figure \ref{fig:ex:1}) and the corresponding Markov kernels $M_j$'s are fast mixing. Parallel tempering and simulated tempering are known to suffer from the same poor mixing in the unequal variance setting (\cite{woodard:09b}). However unlike cyclical MCMC, these algorithms have a built-in Metropolis-Hastings acceptance step that restores the correctness of the target distribution.

\section{SOME THEORETICAL INSIGHTS}\label{sec:theory}
In this section we analyze the convergence of cyclical MCMC. First, we analyze the behavior of Algorithm \ref{algo:1} in the case where the Markov kernels $M_j$'s are fast mixing. In that setting we show that indeed cyclical MCMC can converge to $\Pi$. In the second part we consider a more realistic setting where the cycle of the algorithm can be decomposed into a mode exploration part, and a mode exploitation part. In that regime we show that cyclical MCMC converges to a mixture with correct mixture components, but with possibly incorrect weights.

\subsection{The fast mixing regime}
Since the algorithm generates a nonhomogeneous Markov chain, classical Markov chain mixing time analysis do not apply. We extend the classical Markov chain spectral gap theory to handle nonhomogeneous Markov chains. 

Let $\B_\Theta$ denote the Borel sigma-algebra of $\Theta$. 
We recall that the target density $\Pi$ and the sequence of densities $\Pi_j$ are defined in (\ref{def:pi}) and (\ref{def:pi:j}) respectively. We will abuse notation and also write $\Pi$ (resp. $\Pi_j$) to denote the probability measure on $\Theta$ with density $\Pi$ (resp. $\Pi_j$). For instance we will write $\Pi(A)$ as a short for $\int_A \Pi(\theta)\rmd \theta$. 

To proceed, we need some Markov chain notations. A good reference is (\cite{meyn:tweedie:08}). Given two Markov kernels  $Q_1,Q_2$ on $\Theta$, their product is the Markov kernel $Q_1Q_2$ defined\footnote{In this definition we follow the convenient practice in Markov chains theory of writing the integrant after the integrating measure.} as $(Q_1Q_2)(u,A) \eqdef\int_\Theta Q_1(u,\rmd v)Q_2(v,A)$. This multiplication can naturally be iterated. Given a Markov kernel $Q$, and a probability $\nu$, the product $\nu Q$ denotes the probability measure $(\nu Q)(A) \eqdef \int_\Theta\nu(\rmd u) Q(u,A)$.  Furthermore, given a measure $\nu$ on $\Theta$, and a function $h:\;\Theta\to\rset$, we write $\nu(h)\eqdef \int_\Theta h(u)\nu(\rmd u)$. For $j\geq 1$, we let 
\[\mathcal{L}_j^2\eqdef \mathcal{L}^2(\Pi_j)\eqdef\left\{ h:\;\Theta\to\rset:\;\Pi_j(h^2)<\infty\right\},\]
and for $h,h_1,h_2\in \mathcal{L}_j^2$, we set $\textsf{Var}_j(h) \eqdef \Pi_j(h^2) - \Pi_j(h)^2$ and $\pscal{h_1}{h_2}_j \eqdef \int_\Theta h_1 h_2\rmd\Pi_j$. 
Given two finite measures $\mu,\nu$ on $\Theta$ 
the total variation distance between $\mu$ and $\nu$ is
\[\|\mu -\nu\|_\tv \eqdef \sup_{f:\;|f|\leq 1} \left|\mu(f) - \nu(f)\right|.\]
Given a function $h:\;\Theta\to \rset$, we set $\|h\|_\infty =\sup_{u\in\Theta}\; |h(u)|$.  We impose the following assumption.
\begin{assumption}
\label{H1}
For all $1\leq j\leq L$, the Markov kernel $M_j$ is reversible with respect to $\Pi_j$, and 
\begin{equation}\label{unif:j}
\|\Pi_{j-1} / \Pi_{j}\|_\infty<\infty.
\end{equation}
\end{assumption}

\begin{remark}
Reversibility is imposed here for convenience, and can be removed by introducing the adjoints of the $M_j$'s. Reversibility is a commonly imposed assumption in Markov chain theory, and is satisfied by many MCMC samplers. For instance all Metropolis-Hastings samplers, by definition generate reversible Markov kernels. However there are many other MCMC algorithms that are not reversible. Extending our results beyond the reversible case is an important question for future research.

The boundedness assumption $\|\Pi_{j-1} / \Pi_{j}\|_\infty<\infty$ is typically satisfied when $\Theta$ is bounded, and can essentially be viewed as assuming that $\Theta$ is bounded. Extending our framework to remove that assumption is an important direction for future work.
\vspace{-0.7cm}
\begin{flushright} $\square$ \end{flushright}
\end{remark}

The marginal distribution of $\theta^{(j)}$ at the $j$-th iteration of cyclical MCMC is given by
\[\nu^{(j)} \eqdef \nu^{(0)} M_1\times \cdots \times M_j.\]
We seek conditions under which 
\[\nu^{(kL)} \approx \Pi.\]

In what follows we set
\begin{equation*}
\alpha_j \eqdef \sup\left\{ \Delta_j(f),\;\;f\in \mathcal{L}_{j-1}^2,\;\Pi_{j-1}(f^2)=1\right\},
\end{equation*}
where 
\[\Delta_j(f)\eqdef \int_\Theta\Pi_{j-1}(\rmd x)f^2(x)\int_\Theta M_j^2(x,\rmd z) \left|\frac{\Pi_{j-1}(z)}{\Pi_j(z)}-1\right|.\]
$\alpha_j$ is a measure of discrepancy between $\Pi_{j-1}$ and $\Pi_j$ and can be controlled by the choice of the function $\beta$ and the cycle length $L$. The next result illustrates this point. The proof is given in the supplement.

\begin{proposition}
\label{prop:1}
Suppose that $\Theta$ is bounded, and 
\[\mathsf{osc}(\e)\eqdef\max_{x,y}|\e(x) - \e(y)|,\]
is finite, where $\e$ is as in (\ref{def:pi}). Then there exists a constant $C$, such that for all $1\leq j\leq L$
\begin{equation}\label{bound:alpha}
\alpha_j \leq \frac{C |\dot\beta(t_j)|}{L},\end{equation}
for some $t_j$, with $(j-1)/L \leq t_j \leq j/L$, where $\dot\beta$ is the derivative of $\beta$.
\end{proposition}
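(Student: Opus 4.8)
The plan is to reduce everything to a uniform pointwise bound on the ratio $\Pi_{j-1}(z)/\Pi_j(z)$, since the remaining ingredients of $\Delta_j(f)$ are probability measures that integrate to one. Concretely, I would first establish that there is a constant $C$, depending only on $\mathsf{osc}(\e)$ and $\sup_t|\dot\beta(t)|$, such that
\[
\sup_{z\in\Theta}\left|\frac{\Pi_{j-1}(z)}{\Pi_j(z)}-1\right|\le \frac{C|\dot\beta(t_j)|}{L}
\]
for a suitable $t_j\in[(j-1)/L,j/L]$. Granting this, the proposition follows immediately: bounding the inner integrand in $\Delta_j(f)$ by this uniform constant, using that $M_j^2(x,\cdot)$ is a probability measure so that $\int_\Theta M_j^2(x,\rmd z)=1$, and then invoking the constraint $\Pi_{j-1}(f^2)=1$, gives $\Delta_j(f)\le C|\dot\beta(t_j)|/L$ uniformly over all feasible $f$, hence the same bound for the supremum $\alpha_j$.

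To establish the uniform ratio bound, I would write the ratio explicitly. Setting $\delta\eqdef \beta_{j-1}-\beta_j$, a direct computation from (\ref{def:pi:j}) gives $Z_j/Z_{j-1}=\Pi_{j-1}(\rme^{\delta\e})$ and hence
\[
\frac{\Pi_{j-1}(z)}{\Pi_j(z)}=\Pi_{j-1}(\rme^{\delta\e})\,\rme^{-\delta\e(z)}.
\]
The point is that this is a product of two factors, each close to $1$ when $\delta$ is small. Since $\mathsf{osc}(\e)<\infty$, the ratio is invariant under adding a constant to $\e$, so I may assume $0\le\e\le\omega$ with $\omega\eqdef\mathsf{osc}(\e)$. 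The elementary inequality $|\rme^u-1|\le|u|\rme^{|u|}$ then yields $|\rme^{-\delta\e(z)}-1|\le|\delta|\omega\rme^{|\delta|\omega}$ uniformly in $z$, and the same bound for $|\Pi_{j-1}(\rme^{\delta\e})-1|$ after integrating against $\Pi_{j-1}$. Writing each factor as $1$ plus an error of size at most $|\delta|\omega\rme^{|\delta|\omega}$ and expanding the product controls $|\Pi_{j-1}(z)/\Pi_j(z)-1|$ by a constant multiple of $|\delta|$.

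Finally, I would convert $|\delta|$ into the stated form: by the mean value theorem $\beta_{j-1}-\beta_j=\beta((j-1)/L)-\beta(j/L)=-\dot\beta(t_j)/L$ for some $t_j\in[(j-1)/L,j/L]$, so $|\delta|=|\dot\beta(t_j)|/L$. Since $\beta$ is continuously differentiable and periodic, $B\eqdef\sup_t|\dot\beta(t)|<\infty$, whence $|\delta|\omega\le B\omega/L\le B\omega$ and the factor $\rme^{|\delta|\omega}$ is bounded by $\rme^{B\omega}$; absorbing everything into a single constant $C$ depending only on $\omega$ and $B$ finishes the argument. The computation is essentially routine, and I expect the only point requiring care to be the normalizing-constant ratio $Z_j/Z_{j-1}$: it is itself within $O(|\delta|)$ of $1$ and must be handled jointly with the pointwise factor $\rme^{-\delta\e(z)}$ rather than bounded crudely. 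Using the identity $Z_j/Z_{j-1}=\Pi_{j-1}(\rme^{\delta\e})$ is precisely what keeps the estimate uniform in $z$ and of the correct order $1/L$.
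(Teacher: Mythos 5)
Your proof is correct, and its skeleton coincides with the paper's: both arguments reduce $\alpha_j$ to a uniform-in-$z$ bound on $\left|\Pi_{j-1}(z)/\Pi_j(z)-1\right|$ (using that $M_j^2(x,\cdot)$ is a probability measure and that $\Pi_{j-1}(f^2)=1$, a step the paper leaves implicit), bound that ratio by a constant multiple of $|\beta_{j-1}-\beta_j|$ via $|\rme^u-1|\le|u|\rme^{|u|}$, and finish with the mean value theorem to write $|\beta_{j-1}-\beta_j|=|\dot\beta(t_j)|/L$. Where you genuinely diverge is the treatment of the normalizing constants, which is the only delicate point of the proof. The paper invokes the path sampling (thermodynamic integration) identity of Gelman and Meng along the interpolating family $\Pi_{j,t}\propto\exp\left(-(t\beta_j+(1-t)\beta_{j-1})\e\right)$, giving $\log(Z_{j-1}/Z_j)=-(\beta_j-\beta_{j-1})\int_0^1\int_\Theta\e(u)\Pi_{j,t}(u)\,\rmd u\,\rmd t$; the whole exponent of the ratio then becomes $(\beta_j-\beta_{j-1})\int_0^1\int_\Theta(\e(u)-\e(z))\Pi_{j,t}(u)\,\rmd u\,\rmd t$, which is bounded in absolute value by $|\beta_j-\beta_{j-1}|\,\mathsf{osc}(\e)$, so a single application of the exponential inequality concludes. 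You instead use the exact algebraic identity $Z_j/Z_{j-1}=\Pi_{j-1}(\rme^{\delta\e})$, the shift-invariance of the ratio (to normalize $0\le\e\le\mathsf{osc}(\e)$), and a two-factor expansion $(1+a)(1+b)-1=a+b+ab$. Both routes are valid and elementary. Yours avoids citing path sampling entirely and yields a constant depending only on $\mathsf{osc}(\e)$ and $\sup_t|\dot\beta(t)|$, hence manifestly uniform in $L$ (which is what the fast-mixing discussion after Theorem \ref{thm:1} actually requires), at the price of a slightly larger constant from the cross term; the paper's route keeps the exponent as a single object involving the differences $\e(u)-\e(z)$, so the oscillation enters directly, without renormalizing $\e$ and with one application of the inequality instead of two.
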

\begin{proof}
See Section \ref{sec:proof:prop:1} of the Supplement.
\end{proof}

\begin{remark}
A more natural measure of discrepancy between $\Pi_{j-1}$ and $\Pi_j$ is their total variation distance 
\[\|\Pi_{j-1}- \Pi_j\|_\tv =\int_\Theta \left|\frac{\Pi_{j-1}(z)}{\Pi_j(z)} - 1\right| \Pi_j(z)\rmd z.\]
However the total variation distance does not appear naturally in our analysis.
\vspace{-0.7cm}
\begin{flushright} $\square$ \end{flushright}
\end{remark}

For each $j\geq 1$, let $Q_j:\; \Theta\times \B_\Theta\to\rset$ be the finite kernel  defined as
\begin{equation}\label{def:Qj}
Q_j(x,A) \eqdef \int_\Theta M_j(x,\rmd y)\int_A M_j(y,\rmd z) \frac{\Pi_{j-1}(z)}{\Pi_j(z)}.\end{equation}
Note that for each $x\in\Theta$, $Q_j(x,\cdot)$ is a finite measure on $(\Theta,\B_\Theta)$, but not necessarily a probability measure. We show in Section \ref{sec:proof:prop:1} of the supplement that under H\ref{H1}, $Q_j$ induces an operator $Q_j:\;\mathcal{L}_{j-1}^2\to \mathcal{L}_{j-1}^2$ that is self-adjoint and positive.  We define the ``spectral gap" of the operator $Q_j$ as
\[\lambda_j \eqdef\inf\left\{\mathsf{G}_j(f),\;f\in \mathcal{L}^2_{j-1},\;\textsf{Var}_{j-1}(f)>0\right\},\]
where
\[\mathsf{G}_j(f) \eqdef \frac{\int_\Theta\int_\Theta \left(f(y) - f(x)\right)^2 \Pi_{j-1}(\rmd x)Q_j(x,\rmd y)}{\int_\Theta\int_\Theta \left(f(y) - f(x)\right)^2 \Pi_{j-1}(\rmd x)\Pi_{j-1}(\rmd y)}.\]

\begin{remark}
$\lambda_j$ is large when the spectral gap of $M_j$ is large. This is easily seen from the definition of $Q_j$. One can also easily show for example that if for all $x$, $\|M_j(x,\cdot) - \Pi_j\|_\tv \leq 2(1 - \beta_j)$ for some $\beta_j>0$ that can be viewed as the spectral gap of $M_j$, then $\lambda_j\geq \beta_j$. This is because the last total variation norm inequality is equivalent  to $M_j(x,\cdot) \geq \beta_j \Pi_j(\cdot)$ for all $x$, which in turn implies that $Q_j(x,\cdot) \geq \beta_j\Pi_{j-1}(\cdot)$, and so $\lambda_j\geq \beta_j$.

Finite kernels that are not necessarily Markov appear commonly  in Markov chain theory (for instance in the analysis of sequence Monte Carlo samplers or in large deviations for Markov chains; see e.g. \cite{kontoetal06, herve08, whiteley:13}).
\vspace{-0.7cm}
\begin{flushright} $\square$ \end{flushright}
\end{remark}

With $\alpha_0\eqdef 0$, we set 
\[\Lambda_L \eqdef  \sum_{i=0}^L\alpha_i\prod_{\ell=i+1}^L(1-\lambda_\ell + \alpha_\ell).\]
Under assumption H\ref{H1} we show in (\ref{eq:bound:lambda}) that $\lambda_j \leq 1 + \alpha_j$. Hence it is always true that $\Lambda_L\geq 0$.
Our main result of this section is as follows.
\begin{theorem}\label{thm:1}
Assume H\ref{H1}. Let $\nu^{(0)}(\rmd x) = f_0(x) \Pi(\rmd x)$. 
Then for all $k\geq 1$,
\[
\|\nu^{(kL)}  - \Pi\|_\tv^2 \leq \textsf{Var}_{0}(f_{0})\Lambda_L^k.\]\end{theorem}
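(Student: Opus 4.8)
The plan is to pass from total variation to a $\chi^2$-type functional that contracts from one iteration to the next. Writing $h_j$ for the density of $\nu^{(j)}$ with respect to $\Pi_j$ (so $h_0=f_0$ and, since $\nu^{(j)}$ is a probability measure, $\Pi_j(h_j)=1$ for every $j$), the Cauchy--Schwarz inequality gives, because $\Pi_{kL}=\Pi$,
\[
\|\nu^{(kL)}-\Pi\|_\tv^2 = \left(\int_\Theta |h_{kL}-1|\,\rmd\Pi\right)^2 \le \Pi(h_{kL}^2)-1 = \textsf{Var}_{kL}(h_{kL}).
\]
So it suffices to track $V_j \eqdef \textsf{Var}_j(h_j)=\Pi_j(h_j^2)-1$ and to show $V_{kL}\le \textsf{Var}_0(f_0)\,\Lambda_L^k$. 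The first structural step is to identify how $h_j$ evolves: using reversibility of $M_j$ with respect to $\Pi_j$ and the density ratio $\Pi_{j-1}/\Pi_j$, one checks that $h_j = M_j\big((\Pi_{j-1}/\Pi_j)\,h_{j-1}\big)$.

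The heart of the argument is the identity $\Pi_j(h_j^2)=\langle Q_j h_{j-1},h_{j-1}\rangle_{j-1}$, which follows from the previous display together with self-adjointness of $M_j$ on $\mathcal{L}^2(\Pi_j)$ and the definition (\ref{def:Qj}) of $Q_j$; indeed $Q_j\phi = M_j^2\big((\Pi_{j-1}/\Pi_j)\phi\big)$, and $Q_j$ is self-adjoint and positive on $\mathcal{L}_{j-1}^2$. I would then exploit the reversibility of $Q_j$ with respect to $\Pi_{j-1}$ to rewrite the Dirichlet form in the definition of $\mathsf{G}_j$: setting $q_j\eqdef Q_j\mathbf{1}$, a short computation gives
\[
\int_\Theta\!\int_\Theta (f(y)-f(x))^2\,\Pi_{j-1}(\rmd x)Q_j(x,\rmd y) = 2\Big(\textstyle\int f^2 q_j\,\rmd\Pi_{j-1} - \langle Q_j f,f\rangle_{j-1}\Big),
\]
while the denominator equals $2\,\textsf{Var}_{j-1}(f)$. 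The spectral-gap bound $\mathsf{G}_j\ge\lambda_j$ then yields $\langle Q_j f,f\rangle_{j-1}\le \int f^2 q_j\,\rmd\Pi_{j-1} - \lambda_j\,\textsf{Var}_{j-1}(f)$. Finally, since $q_j-1 = M_j^2(\Pi_{j-1}/\Pi_j-1)$, the discrepancy term is controlled by $\alpha_j$ via $\int f^2(q_j-1)\,\rmd\Pi_{j-1}\le \Delta_j(f)\le \alpha_j\,\Pi_{j-1}(f^2)$. Combining these with $f=h_{j-1}$ and $\Pi_{j-1}(h_{j-1}^2)=1+V_{j-1}$ produces the one-step recursion $V_j \le \alpha_j + (1-\lambda_j+\alpha_j)V_{j-1}$.

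It then remains to iterate this recursion across a full cycle $j=1,\dots,L$ and across the $k$ cycles, invoking the periodicity $\beta_{j+L}=\beta_j$ so that the same coefficients $(\alpha_j,\lambda_j)$ recur and the end-of-cycle density again lives over $\Pi_L=\Pi$. Telescoping over one cycle naturally produces the quantity $\Lambda_L=\sum_{i=0}^L\alpha_i\prod_{\ell=i+1}^L(1-\lambda_\ell+\alpha_\ell)$, and composing $k$ cycles should deliver the stated $\Lambda_L^k$. I expect the main obstacle to be precisely this last assembling step: the kernel $Q_j$ is only defective (sub- or super-stochastic, since $q_j\ne\mathbf{1}$), so the spectral-gap estimate does not contract a genuine variance in the classical manner, and one must check carefully that the additive discrepancy contributions and the multiplicative contraction factors combine into the clean product form $\textsf{Var}_0(f_0)\,\Lambda_L^k$ rather than into a messier mixed sum. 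Controlling the interplay between the mean-preservation constraint $\Pi_j(h_j)=1$ and the non-Markovian normalization $q_j$ across the cycle boundary is, I anticipate, where the real work lies.
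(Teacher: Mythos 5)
Your proposal is, step for step, the paper's own proof: the Cauchy--Schwarz reduction of total variation to $\textsf{Var}_j(f_j)$ is the paper's (\ref{eq:1}); your evolution equation $h_j=M_j\big((\Pi_{j-1}/\Pi_j)h_{j-1}\big)$ is the paper's $f_j=\bar M_jf_{j-1}$; your identity $\Pi_j(h_j^2)=\pscal{Q_jh_{j-1}}{h_{j-1}}_{j-1}$, your Dirichlet-form computation with $q_j=Q_j\mathbf{1}$ (using reversibility of $Q_j$ with respect to $\Pi_{j-1}$), and your control of the defect $q_j-\mathbf{1}$ by $\alpha_j$ reproduce (\ref{proof:thm:step1}) and (\ref{eq:qf}); and you land on exactly the paper's one-step recursion $V_j\le(1-\lambda_j+\alpha_j)V_{j-1}+\alpha_j$. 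All of these steps are correct.

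The step you leave open --- and explicitly flag as where ``the real work lies'' --- is also exactly where the paper's proof stops, and your suspicion is vindicated rather than a defect of your write-up. Iterating the recursion (with the cyclical convention $M_{j+L}=M_j$, hence $\alpha_{j+L}=\alpha_j$ and $\lambda_{j+L}=\lambda_j$) gives
\[
\textsf{Var}_{kL}(f_{kL})\;\le\; \textsf{Var}_0(f_0)\,A^k\;+\;\Lambda_L\left(1+A+\cdots+A^{k-1}\right),
\qquad A\eqdef\prod_{\ell=1}^L(1-\lambda_\ell+\alpha_\ell),
\]
where, since $\alpha_0\eqdef 0$ kills the $i=0$ term, $\Lambda_L=\sum_{i=1}^L\alpha_i\prod_{\ell=i+1}^L(1-\lambda_\ell+\alpha_\ell)$. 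This is precisely the ``messier mixed sum'' you anticipated, and it is all that the argument establishes; the paper's closing assertion that the stated bound follows is not justified. The mixed sum is not dominated by $\textsf{Var}_0(f_0)\Lambda_L^k$ in general (the purely multiplicative term $\textsf{Var}_0(f_0)A^k$ is unaccounted for, since the $i=0$ term of $\Lambda_L$ vanishes), and, more fundamentally, it does not decay in $k$: when $A<1$ it tends to $\Lambda_L/(1-A)>0$, whereas $\Lambda_L^k\to 0$ whenever $\Lambda_L<1$. No bound vanishing as $k\to\infty$ could be proved here anyway, because $\nu^{(kL)}=\nu^{(0)}(M_1\cdots M_L)^k$ converges to the invariant law of the one-cycle kernel, which in general differs from $\Pi$ --- this residual bias is the subject of the rest of the paper. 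So the honest conclusion of this line of argument is the displayed mixed-sum bound, giving $\limsup_k\|\nu^{(kL)}-\Pi\|_\tv^2\le\Lambda_L/(1-A)$; the clean product form in the statement of Theorem \ref{thm:1} should be read as an inaccuracy in the statement, not as an idea missing from your proposal.
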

\begin{proof}
See Section \ref{sec:proof:thm:1}.
\end{proof}

\begin{remark}
To explore the implications of this result, suppose for instance that 
\begin{equation}\label{eq:fast:mix}
\lambda_j\geq \underline{\lambda}>0,\;\;\mbox{ for all }\;j\geq 1.
\end{equation} Then using Proposition \ref{prop:1}, we see that we can choose the tempering $\beta$ and $L$ such that $\alpha_j \leq C|\dot\beta(t_j)| L^{-1} \leq \underline{\lambda}/2$. In that case we have
\[\left|\Lambda_L\right| \leq \frac{C}{L}\sum_{i=0}^L|\dot\beta(t_i)|\left(1-\frac{\underline{\lambda}}{2}\right)^{L-i}\leq \frac{2C\times \|\dot\beta\|_\infty}{L\underline{\lambda}}<1,\]
for $L> 2C\times \|\dot\beta\|_\infty/\underline{\lambda}$. The condition (\ref{eq:fast:mix}) is a fast mixing condition on the kernels $\{M_j,\;j\geq 1\}$. We note that a similar but more subtle analysis can also be developed in cases where some of the initial Markov kernels $M_j$ (for $j$ closed to $1$) have poor mixing.
\vspace{-0.7cm}
\begin{flushright} $\square$ \end{flushright}
\end{remark}

\begin{remark}
Our theorem and the discussion above thus show that in settings where all the Markov kernels $M_j$ have fast mixing, cyclical MCMC does converge to the right target distribution. For instance, in the mixture density example with equal variance setting, it is well-known that tempering by powering significantly improve mixing (\cite{woodard:etal:09}). Thus we expect cyclical MCMC to work well for $L$ large enough, and this is what we observed in the simulations. In contrast, in the unequal variance setting, it is also known that powering does not improve mixing (\cite{woodard:09b}). Hence in that case all the $\lambda_j$'s remain close to $0$, and cyclical MCMC would require exponentially large cycle length $L$ to work.
\vspace{-0.7cm}
\begin{flushright} $\square$ \end{flushright}
\end{remark}

\begin{remark}
One natural objection to Theorem \ref{thm:1} is this: for many MCMC problems where the tails of $\Pi$ are poorly understood and $\nu^{(0)}$ badly chosen, the variance term $\textsf{Var}_{0}(f_{0})$ appearing in the conclusion of Theorem \ref{thm:1} is infinite, and so the theorem does not give nontrivial upper bounds. Fortunately, in many examples, this problem can be easily fixed via truncation as we explain in Section \ref{sec:rem:thm1:var} of the supplement.
\vspace{-0.7cm}
\begin{flushright} $\square$ \end{flushright}
\end{remark}

\subsection{The highly multimodal regime}
We show here that even when it fails to capture correctly the weights of the mixture, cyclical MCMC typically estimates well the component densities in the mixture. In this section we assume that $\Pi$ is a mixture of the form
\begin{equation}\label{mix:den}
\Pi(\theta) = \sum_{i=1}^d w_i f_i(\theta).\end{equation}
Let $\Theta_{1},\ldots,\Theta_{d} \subset \Theta$ be a collection of disjoint subsets of $\Theta$ such that $\Theta_j$ contains the bulk of the probability mass of $f_j$. For each $j \in [d]$, let $I_{j} \subset \Theta_{j}$, where $[d]$ is a short for $\{1,\ldots,d\}$. Finally, fix $L_{2} \in [L]$. We consider the process $\{\theta^{(j)},\;j\geq 0\}$ generated by Algorithm \ref{algo:1} applied to (\ref{mix:den}). 

The main result in this section, Theorem \ref{ThmMixedModes}, has three main error terms. Immediately after each term is introduced, we verify that it is small for Metropolis-Hastings chains with ultimate target
\begin{equation}\label{EqSimplyGaussMixture}
\Pi = \frac{1}{2}\textbf{N}(-1, \sigma^{2}) + \frac{1}{2} \textbf{N}(-1, c^{2}\sigma^{2})
\end{equation} 
for $0.5 < c < 2$, and proposal kernel  $Unif([\theta-\frac{i}{L} \sigma, \theta+ \frac{i}{L}\sigma])$. These simple chains are similar to the toy example considered in Section \ref{sec:mix:1}. 

Our first error term comes from the following ``no-escape" assumption:

\begin{assumption} \label{AssumptionModeDepth}
There exists some $0 \leq \delta_{1} < 1$ with the following property. For any $j \in [d]$ and any $\theta \in I_{j}$, the nonhomogeneous Markov chain 
\begin{equation}
\theta^{(L_{2})} = \theta, \;\; \theta^{(i)} \sim M_{i}(\theta^{(i-1)},\cdot) \qquad L_{2} < i \leq  L
\end{equation}
satisfies 
\begin{equation}
\PP\left(\bigcup_{i=L_{2}+1}^L \{ \theta^{(i)} \notin \Theta_{j}\}\right) \leq \delta_{1}.
\end{equation}
 
\end{assumption}

\begin{remark}\label{RemLongLyap}
One can verify Assumption \ref{AssumptionModeDepth} by showing that the kernels $M_j$ satisfy a Lyapunov drift condition. We explain the details in Section \ref{sec:checking:AssumptionModeDepth} of the supplement. For the target given in Equation \eqref{EqSimplyGaussMixture}, for $\sigma$ small enough, we take $\Theta_{1} = [-1.5,-0.5]$ and $\Theta_{2} = [0.5,1.5]$ and time $\frac{L}{3} \leq L_{2} \leq \frac{2L}{3}$, and the existence of a Lyapunov drift condition is given by Lemma \ref{LyapunovExplicit} of the supplement.
\vspace{-0.7cm}
\begin{flushright} $\square$ \end{flushright}
\end{remark}

For any $j \in [d]$, denote by $M_{i}^{(j)}$ the Metropolis-Hastings chain with proposal distribution $M_{i}$ and target distribution $\Pi_{i}^{(j)}$ with density proportional to $\Pi_{i}(\theta) \textbf{1}_{\theta \in \Theta_{j}}$; we call this the ``restriction" of $M_{i}$ (respectively $\Pi_{i}$) to the set $\Theta_{j}$. We make the following ``mixing within modes" assumption:

\begin{assumption} \label{AssumptionModeMixing}
There exists $0 \leq \delta_{2} < 1$ with the following property. Fix $j \in [d]$ and $\theta \in I_{j}$. Define the time-inhomogeneous Markov chain:
\begin{equation}
\theta^{(L_{2})} = \theta, \; \theta^{(i)} \sim M_{i}^{(j)}(\theta^{(i-1)},\cdot) \qquad L_{2} < i \leq L.
\end{equation}
This chain satisfies:
\begin{equation}
\| \PP(\theta^{(L)}\in\cdot) - \Pi^{(j)}(\cdot) \|_\tv \leq  \delta_{2}. 
\end{equation}
\end{assumption}

\begin{remark}
The inequality in Assumption \ref{AssumptionModeMixing} is exactly the conclusion of Theorem \ref{thm:1} but applied to the restricted process, and so that theorem can be used to verify Assumption \ref{AssumptionModeMixing}.
\vspace{-0.7cm}
\begin{flushright} $\square$ \end{flushright}
\end{remark}

Under these assumptions, we have:

\begin{theorem}\label{ThmMixedModes}
Let Assumptions \ref{AssumptionModeDepth}, \ref{AssumptionModeMixing} hold and let $\delta = \delta_{1} + \delta_{2}$. Then the Markov chain satisfies:
\[\| \nu^{(L)} - \sum_{j=1}^{d} \nu^{(L_{2})}(I_{j}) \Pi^{(j)} \|_\tv \leq \delta +  (1- \nu^{(L_{2})}(\cup_{j=1}^{d}I_{j})). \]
\end{theorem}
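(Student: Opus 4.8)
The plan is to condition on the state $\theta^{(L_2)}$ at the start of the exploitation phase and to propagate each starting point separately through the composite kernel $R \eqdef M_{L_2+1}\cdots M_L$, so that $\nu^{(L)} = \nu^{(L_2)} R$. Writing $I \eqdef \bigcup_{j=1}^d I_j$ (a disjoint union, since $I_j \subset \Theta_j$ and the $\Theta_j$ are disjoint), I would split the starting measure as
\[\nu^{(L)}(\cdot) = \sum_{j=1}^d \int_{I_j} \nu^{(L_2)}(\rmd\theta)\, R(\theta,\cdot) + \int_{\Theta\setminus I}\nu^{(L_2)}(\rmd\theta)\,R(\theta,\cdot).\]
The last term is a nonnegative sub-probability measure of total mass $1 - \nu^{(L_2)}(I)$, and since the total-variation norm of a nonnegative measure equals its mass, it contributes exactly $1 - \nu^{(L_2)}(I)$ to the final bound. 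Everything then reduces to controlling $R(\theta,\cdot)$ for starting points $\theta \in I_j$, the goal being $\|R(\theta,\cdot)-\Pi^{(j)}\|_\tv \le \delta$ for every such $\theta$.

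The key comparison is between the full chain driven by $\{M_i\}$ and the restricted chain driven by $\{M_i^{(j)}\}$, both started from $\theta^{(L_2)}=\theta \in I_j$. Using reversibility of $M_i$ with respect to $\Pi_i$ (assumption H\ref{H1}), the detailed-balance relation $\Pi_i(\rmd x)M_i(x,\rmd y)=\Pi_i(\rmd y)M_i(y,\rmd x)$ forces the Metropolis-Hastings acceptance probability defining $M_i^{(j)}$ to equal one for every move with both endpoints in $\Theta_j$; that is, $M_i^{(j)}$ accepts with probability one every proposed move landing inside $\Theta_j$ and otherwise stays. This lets me couple the two chains by common proposals: at step $i$, draw $Y\sim M_i(x,\cdot)$, let the full chain move to $Y$, and let the restricted chain move to $Y$ if $Y \in \Theta_j$ and stay at $x$ otherwise. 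As long as every proposal lands in $\Theta_j$ the two chains coincide, so the coupling can only break on the event that the full chain exits $\Theta_j$ at some time in $(L_2,L]$. By the no-escape assumption (Assumption \ref{AssumptionModeDepth}) this event has probability at most $\delta_1$, so the set $\{\theta^{(L)}\ne\tilde\theta^{(L)}\}$ is contained in the escape event and the coupling inequality gives $\|R(\theta,\cdot) - \tilde R^{(j)}(\theta,\cdot)\|_\tv \le \delta_1$, where $\tilde R^{(j)}$ is the time-$L$ marginal of the restricted chain. The within-mode mixing assumption (Assumption \ref{AssumptionModeMixing}) gives $\|\tilde R^{(j)}(\theta,\cdot) - \Pi^{(j)}\|_\tv \le \delta_2$, and the triangle inequality yields $\|R(\theta,\cdot) - \Pi^{(j)}\|_\tv \le \delta_1 + \delta_2 = \delta$.

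Finally I would recombine. Since $\sum_j \nu^{(L_2)}(I_j)\Pi^{(j)} = \sum_j \int_{I_j}\nu^{(L_2)}(\rmd\theta)\,\Pi^{(j)}$, subtracting from the split of $\nu^{(L)}$ and using the convexity bound $\|\int \rho_\theta\,\lambda(\rmd\theta)\|_\tv \le \int \|\rho_\theta\|_\tv\,\lambda(\rmd\theta)$ for the signed measures $\rho_\theta = R(\theta,\cdot)-\Pi^{(j)}$ produces a contribution $\le \delta\sum_j\nu^{(L_2)}(I_j) = \delta\,\nu^{(L_2)}(I) \le \delta$ from the $I_j$ pieces, while the complement contributes $1-\nu^{(L_2)}(I)$, which together give the stated inequality. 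I expect the main obstacle to be the coupling step: one must verify carefully that $M_i^{(j)}$ agrees with $M_i$ on all moves interior to $\Theta_j$ — which is precisely where reversibility is used — so that the coupling remains intact exactly until the full chain escapes, cleanly matching the event controlled by Assumption \ref{AssumptionModeDepth}.
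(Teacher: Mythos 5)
Your proof is correct and takes essentially the same route as the paper's: decompose $\nu^{(L)}$ according to where $\theta^{(L_2)}$ lands (in some $I_j$ or in the complement of $\cup_j I_j$), control each start $\theta'\in I_j$ by combining the no-escape assumption with the within-mode mixing assumption to get $\|\prod_{i=L_2+1}^L M_i(\theta',\cdot)-\Pi^{(j)}\|_\tv\leq\delta_1+\delta_2$, and charge the complement its total mass $1-\nu^{(L_2)}(\cup_j I_j)$. The only difference is presentational: the paper argues set-by-set via $\inf/\sup$ bounds over $\theta'\in I_j$ and leaves the coupling between $M_i$ and its Metropolized restriction $M_i^{(j)}$ implicit, whereas you spell out that coupling (including the fact, guaranteed by reversibility, that $M_i^{(j)}$ accepts all moves interior to $\Theta_j$), which is a useful elaboration of the same argument.
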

\begin{proof}See Section \ref{sec:proof:ThmMixedModes}.
\end{proof}

\begin{remark}
The remainder term $(1 -  \sum_{j=1}^{d} \nu^{(L_{2})}(I_{j}))$ appearing in Theorem \ref{ThmMixedModes} can be bounded using the same Lyapunov function approach described in Remark \ref{RemLongLyap}. In particular, assume there exists a function $V:\Theta\to [1,\infty)$ and constants $m > 0$, $1 < L_{1} < L_{2}$, $0 < \alpha \leq 1$, and $0 \leq \beta < \infty$ with following properties:
\begin{enumerate}
\item For $\theta\notin \cup_{j} \Theta_j$, we have $V(\theta)\geq e^{m}$.
\item For all $\theta \in \cup_{j} \Theta_{j}$ and all $L_{1} \leq i \leq L_{2}$, one has \[ M_iV(\theta) \leq (1- \alpha) V(\theta) + \beta.\]
\end{enumerate}
Then applying Markov's inequality, we find
\begin{multline*}
\PP\left(X_{L_{2}} \notin \cup_{j} \Theta_j\right) \leq \PP\left(V(X_{L_{2}}) > e^m\right) \\
\leq e^{-m}\PE\left(V(X_{L_{3}})\right) \leq e^{-m}\left(\frac{\beta}{\alpha} + \PE\left(V(X_{L_{1}})\right) \right).
\end{multline*}
Such a Lyapunov function is known to exist in substantial generality. For example, see Theorem 4.1 of \cite{jarnerethansen98} for conditions under which  $V(\theta) = \Pi(\theta)^{-c}$ is a Lyapunov functions for $M_{L}$ for all $0 < c < 1$; this result covers the example in Equation \eqref{EqSimplyGaussMixture}. If we choose $L_{1} \geq a L$, this means that for all $0 < c < \frac{a}{2}$ the single function $V(\theta) = \Pi(\theta)^{-c}$ is simultaneously a Lyapunov function for all chains $M_{i}$ with $L_{1} \leq i \leq L_{2}$. 
\vspace{-0.7cm}
\begin{flushright} $\square$ \end{flushright}
\end{remark}

\begin{remark}
A reviewer insightfully asked whether Theorem \ref{ThmMixedModes} can be  extending to analyze $\nu^{(kL)}$. This can be done using the following general argument. If a Markov kernel $P$ and a probability measure $\nu$ (not nec. invariant under $P$) satisfy $\|P(x,\cdot)-\nu\|_{\mathsf{tv}}\leq \rho$, for all $x$, and for some $\rho\in(0,1)$, then we can say that $P$ satisfies Doeblin's condition, and therefore admits an invariant $\pi$, say, and it holds $\|\pi-\nu\|_{\mathsf{tv}}\leq \rho$. Therefore, for all $k\geq 1$, $\|P^k(x,\cdot)-\nu\|_{\mathsf{tv}}\leq \kappa^k + \rho$, for some $\kappa\in(0,1)$. 

\vspace{-0.7cm}
\begin{flushright} $\square$ \end{flushright}
\end{remark}

\subsubsection{Illustration with a two-dimensional mixture}
As illustration of Theorem \ref{ThmMixedModes}  we consider a two-dimensional Gaussian mixture with 25 components adapted from \cite{zhang2019cyclical}, where
\[\Pi(\theta) = \frac{1}{25} \sum_{i=1}^{25} f_i(\theta),\;\theta\in\rset^2,\]
where $f_i$ is the density of the Gaussian distribution $\textbf{N}(\mu_i,\sigma_i^2 I_2)$. The means $\mu_i$'s  are the elements of $\{-4,-2,0,2,4\}\times \{-4,-2,0,2,4\}$, and $\sigma_i^2=0.2$ in the equal variance setting, and $\sigma_i^2 =0.2/i$, $1\leq i\leq 25$ in the unequal variance setting. In both cases we apply Algorithm \ref{algo:1} with $L=20,000$, $K=50,000$ using a random walk Metropolis with proposal $\textbf{N}(x,0.01\beta_j^{-1/2}I_2)$, and we take $\beta$ as in (\ref{beta:eq}) with $r=1$ (we obtained similar results for values of $r\leq 20$ that we tried). The results are shown on Figure \ref{fig:ex:2}. The estimate of the mixture weights (a uniform distribution on $\{1,\ldots,25\}$) is given on the second row of Figure \ref{fig:ex:2}. Again, we see that in the constant variance setting we recover correctly the weights, but the recovery is poor in the unequal variance setting.

We also look at the $x$-component of $f_{25}$. The shaded area in the top-left (resp. top-right) plot of Figure \ref{fig:ex:2} shows the estimated density in the equal  (resp. the unequal variance) variance setting. The true densities are plotted in dashed-line. We see that in both settings the recovery is good, which confirms the conclusion of Theorem \ref{ThmMixedModes}.

\begin{figure}
	\caption{Top row is the $x$-component of the component density $f_{25}$ and its estimate as produced by cyclical MCMC. The true densities are plotted in blue dashed line. Top-left is the equal variance setting. Bottom left shows the estimated weights of the mixture.}
	\includegraphics[width = 0.75\textwidth]{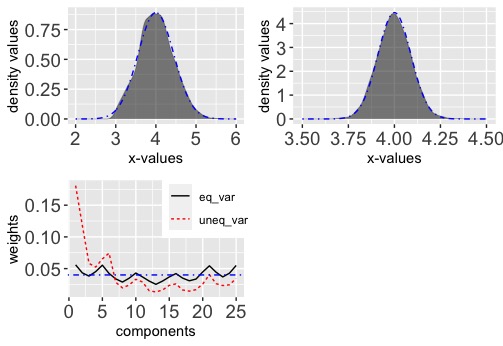}
	\label{alexnet_fashion}
\label{fig:ex:2}
\end{figure}

\section{CONCLUSION REMARKS}
Cyclical MCMC is an efficient implementation of tempering that operates without the costly auxiliary variables required in PT and ST.  The relationship between Cyclical MCMC and ST/PT is similar in some sense to the relationship between the Unadjusted Langevin algorithm and the Metropolis-Adjusted Langevin algorithm \cite{durmus:moulines:19}. 

An initial analysis of the Cyclical MCMC was undertaken by \cite{zhang2019cyclical}. However their convergence upper bounds are actually non-informative. We show in this work that the algorithm does converge to the desired limit in settings where the Markov kernels used are fast mixing, and sufficiently long cycles are used. However in the far more common settings of slow mixing kernels, the algorithm may fail to converge to the correct limit. Indeed, in a simple mixture example with unequal variance where tempering is known to produce slow mixing kernels (\cite{woodard:09b}), we show  by simulation that cyclical MCMC fails to converge to the desired limit. However on the bright side, we also found that even when it fails to capture correctly the weights of the modes, cyclical MCMC typically estimates well the local shape of each mode.

Since it is biased toward flatter modes, in the context of Bayesian inference (Section \ref{sec:background}), it appears that cyclical MCMC leads to a systematic over-estimation of the epistemic uncertainty, which may be a desirable feature in a prediction setting. It has also been observed empirically that in deep learning, flatter modes seem generalize better (see e.g. \cite{pittorino:etal:2021}). It is thus possible that the biased asymptotic behavior of cyclical MCMC that we identified here may become a useful feature in some settings, although more research is needed on this issue.

The bias of the algorithm that we identified in the unequal variance mixture examples is ultimately a limitation of powering as a way of tempering. Our work thus also raises the question of how to build better tempering paths -- that go beyond simple powering -- to obtain fast mixing kernels. There are some recent developments on this issue in the context of PT/ST (\cite{syed:etal:21}). How to leverage these ideas while retaining the initial computational efficiency of cyclical MCMC is also an important direction for future research.


\section{Proof of Theorem \ref{thm:1}}\label{sec:proof:thm:1}
For $j\geq 1$, let $\nu^{(j)}$ denote the marginal distribution of $\theta^{(j)}$, and define the kernel $\bar M_j$ by
\[\bar M_j(u,A) \eqdef \int_A M_{j}(u,\rmd v)\frac{\Pi_{j-1}(v)}{\Pi_{j}(v)}.\]
First, we observe that if $\nu^{(j)}$ admits a density with respect to $\Pi_j$, and $\rmd \nu^{(j)} / \rmd \Pi_j = f_j$,  then $\nu^{(j+1)}$ admits a density with respect to $\Pi_{j+1}$, and
\[\frac{\rmd \nu^{(j+1)}}{\rmd \Pi_{j+1}}(u) = \bar M_{j+1}f_j(u),\;\;\;u\in\Theta.\]
To see this, use reversibility and write for any $A\in\B_\Theta$,
\begin{multline*}
\nu^{(j+1)}(A) = \int_\Theta \Pi_j(\rmd u) f_j(u)\int_\Theta M_{j+1}(u,\rmd v)\textbf{1}_A(v)\\
=\int_A \Pi_{j+1}(\rmd u) \int_\Theta M_{j+1}(u,\rmd v)f_j(v)\frac{\Pi_j(v)}{\Pi_{j+1}(v)}. 
\end{multline*}
Since, by assumption the initial distribution of the cyclical MCMC sampler has a density denoted $f_0$ with respect to $\Pi$, and $\nu^{(0)}(\rmd x) = f_0 (x)\Pi_0(\rmd x)$, we conclude that for all $j\geq 1$, $\nu^{(j)}$ has a density $f_j$ with respect to $\Pi_j$, and the sequence $\{f_j,\;j\geq 0\}$ satisfies 
 \begin{equation}
 \label{eq:rec:fj}
 f_{j+1} = \bar M_{j+1}f_j.
 \end{equation}
 Using this and the Cauchy-Schwarz inequality, we write for all $j\geq 0$,
\begin{equation}\label{eq:1}
\|\nu^{(j)}  - \Pi_{j}\|_\tv =\int_\Theta\left|\frac{\rmd\nu^{(j)}}{\rmd \Pi_j}(u) -1\right| \Pi_j(\rmd u) \leq \sqrt{\textsf{Var}_j\left(f_j\right)}.\end{equation}

We show in Lemma \ref{lem:1} in the supplement that if $f\in\mathcal{L}_{j-1}^2$, then $\bar M_j f\in\mathcal{L}_j^2$, and the adjoint of the operator $\bar M_j$ is $M_j$. Also, recall that $f_j$ is the density of $\nu^{(j)}$ with respect to $\Pi_j$, and we have seen above that $f_j =\bar M_j f_{j-1}$, and $\Pi_j(\bar M_j f_{j-1}) = \Pi_{j-1}(f_{j-1})=1$. Therefore,
\begin{multline*}
\textsf{Var}_j(f_j) =\int_\Theta (\bar M_j f_{j-1})^2 \Pi_j(\rmd x) - 1 = \pscal{\bar M_jf_{j-1}}{\bar M_jf_{j-1}}_{j} - 1 \\
=\pscal{f_{j-1}}{M_j \bar M_jf_{j-1}}_{j-1} - 1 =\pscal{f_{j-1}}{Q_jf_{j-1}}_{j-1} -1,
\end{multline*}
where the operator $Q_j$ is as introduced in (\ref{def:Qj}). Whereas
\[\textsf{Var}_{j-1}(f_{j-1}) = \pscal{f_{j-1}}{f_{j-1}}_{j-1} -1.\]
Hence
\begin{equation}\label{proof:thm:step1}
\textsf{Var}_j(f_j)  =  \textsf{Var}_{j-1}(f_{j-1}) -\pscal{f_{j-1}}{\left( \mathbb{I} - Q_j\right)f_{j-1}}_{j-1},\end{equation}
where $\mathbb{I}$ denotes the identity operator. We now relate the term $\pscal{f_{j-1}}{\left( \mathbb{I} - Q_j\right)f_{j-1}}_{j-1}$ to the spectral gap $\lambda_j$ of $Q_j$. For $f\in \mathcal{L}_{j-1}^2$,
\begin{multline*}
\int_\Theta\int_\Theta \left(f(y) - f(x)\right)^2 \Pi_{j-1}(\rmd x)Q_j(x,\rmd y) \\
= \int_\Theta\Pi_{j-1}(\rmd x) Q_j f^2(x) + \int_\Theta\Pi_{j-1}(\rmd x) f^2(x) Q_j(x,\Theta) \\
-2 \pscal{f}{Q_jf}_{j-1}.
\end{multline*}
We have
\[ Q_j(x,\Theta) = 1 + \int_\Theta M_j(x,\rmd y)\int_\Theta M_j(y,\rmd z) \left(\frac{\Pi_{j-1}(z)}{\Pi_j(z)}-1\right).\]
Hence
\begin{multline*}
\int_\Theta\Pi_{j-1}(\rmd x) f^2(x) Q_j(x,\Theta) = \Pi_{j-1}(f^2)
+ \int_\Theta\Pi_{j-1}(\rmd x)f^2(x)\\
\times \int_\Theta M_j(x,\rmd y)\int_\Theta M_j(y,\rmd z) \left(\frac{\Pi_{j-1}(z)}{\Pi_j(z)}-1\right).\end{multline*}
And since $\Pi_j(Q_jf) = \Pi_{j-1}(f)$, we have
\begin{multline*}
\int_\Theta\Pi_{j-1}(\rmd x) Q_j f^2(x) = \Pi_{j-1}(f^2) + \int_\Theta\Pi_{j}(\rmd x) \left(\frac{\Pi_{j-1}(x)}{\Pi_j(x)}-1\right) Q_j f^2(x)\\
= \Pi_{j-1}(f^2) + \int_\Theta\Pi_{j-1}(\rmd x)f^2(x)\\
\times\int_\Theta M_j(x,\rmd y)\int_\Theta M_j(y,\rmd z) \left(\frac{\Pi_{j-1}(z)}{\Pi_j(z)}-1\right).
\end{multline*}
We conclude that
\begin{multline}\label{eq:qf}
\frac{1}{2}\int_\Theta\int_\Theta \left(f(y) - f(x)\right)^2 \Pi_{j-1}(\rmd x)Q_j(x,\rmd y) = \pscal{f}{\left( \mathbb{I} - Q_j\right)f}_{j-1}  \\
+ \int_\Theta\Pi_{j-1}(\rmd x)f^2(x)\\
\times \int_\Theta M_j(x,\rmd y)\int_\Theta M_j(y,\rmd z) \left(\frac{\Pi_{j-1}(z)}{\Pi_j(z)}-1\right).
\end{multline}
Since $Q_j$ is positive as shown in Lemma \ref{lem:1}, with $\bar f= f - \Pi_{j-1}(f)$, we have $\pscal{\bar f}{\left( \mathbb{I} - Q_j\right)\bar f}_{j-1} \leq \textsf{Var}_{j-1}(f)$, so that
\[
\frac{1}{2}\int_\Theta\int_\Theta \left(f(y) - f(x)\right)^2 \Pi_{j-1}(\rmd x)Q_j(x,\rmd y) \leq \left(1 +\alpha_j\right) \textsf{Var}_{j-1}(f),\]
which implies
\begin{equation}\label{eq:bound:lambda}
\lambda_j \leq 1 + \alpha_j.
\end{equation}
Applied to $f_{j-1}$, (\ref{eq:qf}) gives 
\begin{multline*}
-\pscal{f_{j-1}}{\left( \mathbb{I} - Q_j\right)f_{j-1}}_{j-1} \leq  - \frac{1}{2}\int_\Theta\int_\Theta \left(f_{j-1}(y) - f_{j-1}(x)\right)^2 \Pi_{j-1}(\rmd x)Q_j(x,\rmd y) \\
+\alpha_j \int_\Theta\Pi_{j-1}(\rmd x)f_{j-1}^2(x)\\
\leq -\lambda_j\textsf{Var}_{j-1}(f_{j-1}) + \alpha_j\textsf{Var}_{j-1}(f_{j-1}) + \alpha_j.
\end{multline*}
Taking this last display in (\ref{proof:thm:step1}), we deduce that
\[\textsf{Var}_j(f_j)  \leq \left(1 - \lambda_j + \alpha_j\right) \textsf{Var}_{j-1}(f_{j-1}) + \alpha_j.\]
Iterating this inequality yields
\[
\textsf{Var}_j(f_j) \leq \textsf{Var}_0(f_0)\prod_{k=1}^j(1-\lambda_k+\alpha_k) + \sum_{i=1}^j\alpha_i\prod_{\ell=i+1}^j(1-\lambda_\ell + \alpha_\ell).\]
The result follows from the last display and (\ref{eq:1}).

\section{Proof of Theorem \ref{ThmMixedModes}}\label{sec:proof:ThmMixedModes}
Fix a measurable set $A$ and starting point $\theta$. Since $\nu^{(L)} = \nu^{(L_2)}\prod_{i=L_2+1}^LM_i$,  we have
\begin{multline*}
\nu^{(L)}(A) \geq \sum_{j=1}^{d} \nu^{(L_{2})}(I_{j}) \inf_{\theta' \in I_{j}} (\prod_{i=L_{2}+1}^{L} M_{i,n})(\theta',A)  \geq \sum_{j=1}^{d} \nu^{(L_{2})}(I_{j}) (\Pi^{(j)}(A)-\delta),
\end{multline*}
using Assumptions \ref{AssumptionModeDepth} and \ref{AssumptionModeMixing}.  Similarly,
\begin{multline*}
\nu^{(L)}(A) \leq \sum_{j=1}^{d} \nu^{(L_{2})}(I_{j}) \sup_{\theta' \in I_{j}} (\prod_{i=L_{2}}^{L} M_{i,n})(\theta',A) + (1 -  \sum_{j=1}^{d} \nu^{(L_{2})}(I_{j}))\\
 \leq \sum_{j=1}^{d} \nu^{(L_{2})}(I_{j}) (\Pi^{(j)}(A)+\delta) 
 + (1 -  \sum_{j=1}^{d} \nu^{(L_{2})}(I_{j})).
\end{multline*}
The proof follows from combining these two inequalities.

\bigskip

\newpage
\bibliographystyle{ims}
\bibliography{biblio_mcmc, biblio_new}

\appendix

\section{PROOF OF PROPOSITION \ref{prop:1}}\label{sec:proof:prop:1}
In what follows we write $\beta_j$ as a short for $\beta(j/L)$. We have
\[\frac{\Pi_{j-1}(z)}{\Pi_j(z)} = \exp\left((\beta_j - \beta_{j-1})\e(z) - \log(Z_{j-1}/Z_j)\right).\]
For $t\in [0,1]$, let $\Pi_{j,t}(\theta)\propto \exp(-(t\beta_j+(1-t)\beta_{j-1})\e(\theta))$ be a probability measure that interpolates between $\Pi_{j-1}$ and $\Pi_j$. 
By the path sampling identity (\cite{gelmanetmeng98}),
\[\log\left(\frac{Z_{j-1}}{Z_j}\right)=-\left(\beta_j-\beta_{j-1}\right)\int_0^1\int_\Theta\e(u)\Pi_{j,t}(u)\rmd u.\]
We deduce that
\[(\beta_j - \beta_{j-1})\e(z) - \log\left(\frac{Z_{j-1}}{Z_j}\right)
= \left(\beta_j-\beta_{j-1}\right)\int_0^1\int_\Theta(\e(u) - \e(z))\Pi_{j,t}(u)\rmd u.\]
Since $|e^x-1|\leq xe^{|x|}$, we obtain
\[\left|\frac{\Pi_{j-1}(z)}{\Pi_j(z)} -1\right| \leq |\beta_j - \beta_{j-1}|\mathsf{osc}(\e) e^{|\beta_j - \beta_{j-1}|\mathsf{osc}(\e)}.\]
We take $C = \mathsf{osc}(\e)\times \max_{1\leq j\leq L}e^{|\beta_j - \beta_{j-1}|\mathsf{osc}(\e)}$, and the result follows by first order Taylor expansion of $\beta$.

\section{ON THE TERM $\textsf{Var}_0(f_0)$ IN THEOREM \ref{thm:1}}\label{sec:rem:thm1:var}
One natural objection to the conclusion of Theorem \ref{thm:1} is: for typical MCMC algorithms, the variance term $\textsf{Var}_{0}(f_{0})$ appearing in the conclusion of Theorem \ref{thm:1} is infinite, and so the theorem does not give nontrivial upper bounds. Fortunately, in many examples, this problem can be easily fixed via truncation. More precisely, assume there exist constants $0 < \omega < 1$, $0 < C < \infty$ and $T \in \mathbb{N}$ so that $\prod_{t=0}^{T} M_{t}$ can be written in the form
\[ 
\prod_{t=0}^{T} M_{t}(\theta,\cdot) = \omega H(\cdot) + (1-\omega) R_T(\theta,\cdot),
\] 
where $H$ has a density  $h$ satisfying $\textsf{Var}_{T}(h) \leq C$. In this case, one has
\begin{equation}
\| \prod_{s=0}^{L}M_{s}(\theta,\cdot) - \Pi \|_\tv \leq \| H \prod_{s=T+1}^{L}M_{s} - \Pi \|_\tv + 2(1-\omega),
\end{equation} 
and so Theorem \ref{thm:1} can be applied to the first term starting from time $T$, with the variance term bounded by $C$. 

\section{ON THE KERNELS $Q_j$'s}
\begin{lemma}\label{lem:1}
Assume H\ref{H1}. Then for $f,h\in \mathcal{L}_{j-1}^2$, and $g\in \mathcal{L}_{j}^2$, $\bar M_jf\in \mathcal{L}_j^2$, $\pscal{\bar M_jf}{g}_j = \pscal{f}{M_j g}_{j-1}$. Furthermore, $\pscal{f}{Q_jh}_{j-1} = \pscal{h}{Q_j f}_{j-1}$, and $\pscal{f}{Q_jf}_{j-1} \geq 0$.
\end{lemma}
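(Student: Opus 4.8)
The plan is to recognize that $Q_j = M_j \bar M_j$ and that the entire lemma reduces to establishing that $\bar M_j$ maps $\mathcal{L}_{j-1}^2$ into $\mathcal{L}_j^2$ with adjoint $M_j$ (between the two weighted spaces). Writing $r_j(v) \eqdef \Pi_{j-1}(v)/\Pi_j(v)$, so that $\bar M_j f = M_j(r_j f)$ and $Q_j f = M_j \bar M_j f$, assumption H\ref{H1} gives $B \eqdef \|r_j\|_\infty < \infty$, and this boundedness together with the reversibility of $M_j$ is what makes all of the $\mathcal{L}^2$ bookkeeping go through.

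First I would prove $\bar M_j f \in \mathcal{L}_j^2$. Since $M_j(u,\cdot)$ is a probability measure, Jensen's inequality gives $(\bar M_j f(u))^2 \leq M_j((r_j f)^2)(u)$, and integrating against $\Pi_j$ and using $\Pi_j M_j = \Pi_j$ (reversibility implies invariance) yields $\Pi_j((\bar M_j f)^2) \leq \int_\Theta \Pi_j(\rmd v) r_j(v)^2 f(v)^2$. The key simplification is that $\Pi_j(\rmd v) r_j(v)^2 = \Pi_{j-1}(\rmd v) r_j(v)$ as measures, so the right-hand side is at most $B\,\Pi_{j-1}(f^2) < \infty$. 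The same computation, with $g$ in place of $r_j f$ and using $\Pi_{j-1} = r_j \Pi_j$, shows $M_j g \in \mathcal{L}_{j-1}^2$ for $g \in \mathcal{L}_j^2$, which is needed for the right-hand side of the adjoint identity to be well defined.

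Next I would establish the adjoint relation $\pscal{\bar M_j f}{g}_j = \pscal{f}{M_j g}_{j-1}$. Expanding the left side and applying reversibility in the form $\Pi_j(\rmd u) M_j(u,\rmd v) = \Pi_j(\rmd v) M_j(v,\rmd u)$ (with Fubini justified by Cauchy--Schwarz and the $\mathcal{L}^2$ memberships just established), the integral becomes $\int_\Theta \Pi_j(\rmd v) r_j(v) f(v) M_j g(v)$; using $\Pi_j(\rmd v) r_j(v) = \Pi_{j-1}(\rmd v)$ this equals $\pscal{f}{M_j g}_{j-1}$. Self-adjointness and positivity of $Q_j$ are then immediate: taking $g = \bar M_j h$ gives $\pscal{f}{Q_j h}_{j-1} = \pscal{f}{M_j \bar M_j h}_{j-1} = \pscal{\bar M_j f}{\bar M_j h}_j$, which is symmetric in $f$ and $h$, and setting $h = f$ gives $\pscal{f}{Q_j f}_{j-1} = \|\bar M_j f\|_j^2 \geq 0$.

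I expect the only genuine obstacle to be the integrability and Fubini justifications, namely confirming $\bar M_j f \in \mathcal{L}_j^2$ and $M_j g \in \mathcal{L}_{j-1}^2$ so that all inner products converge absolutely and the reversibility exchange of measures is legitimate. Once the bound $B = \|r_j\|_\infty < \infty$ from H\ref{H1} is in hand, these are routine applications of Jensen's inequality and the invariance $\Pi_j M_j = \Pi_j$, with the elementary measure identities $\Pi_j r_j = \Pi_{j-1}$ and $\Pi_j r_j^2 = \Pi_{j-1} r_j$ doing all of the real work.
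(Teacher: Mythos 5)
Your proof is correct and takes essentially the same approach as the paper's: Jensen's inequality (Cauchy--Schwarz) combined with the invariance $\Pi_j M_j = \Pi_j$ and the identity $\Pi_j r_j^2 = \Pi_{j-1} r_j$ for the $\mathcal{L}^2$ membership, reversibility for the adjoint identity, and the factorization $Q_j = M_j \bar M_j$ for positivity. The only cosmetic difference is that you deduce the symmetry $\pscal{f}{Q_j h}_{j-1} = \pscal{h}{Q_j f}_{j-1}$ from the adjoint relation by writing it as $\pscal{\bar M_j f}{\bar M_j h}_j$, whereas the paper verifies it by a direct computation using reversibility of $M_j^2$; your explicit check that $M_j g \in \mathcal{L}_{j-1}^2$ is a small detail the paper leaves implicit.
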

\begin{proof}
Set $\rho_j\eqdef \|\Pi_{j-1} /\Pi_j\|_\infty$. For $f\in \mathcal{L}_{j-1}^2$, the fact that $\bar M_j f$ belongs to $\mathcal{L}_j^2$ follows from the Cauchy-Schwarz inequality and $\Pi_j M_j=\Pi_j$. Indeed,
\begin{multline*}
\int_\Theta\Pi_j(\rmd u) |\bar M_j f(u)|^2 
\leq \int_\Theta \Pi_j(\rmd u) \int_\Theta M_j(u,\rmd v) f^2(v) \left|\frac{\Pi_{j-1}(v)}{\Pi_j(v)}\right|^2 \\
=  \int_\Theta f^2(v)\frac{\Pi_{j-1}(v)}{\Pi_j(v)} \Pi_{j-1}(\rmd v) \leq \rho_j\Pi_{j-1}(f^2).
\end{multline*}
The equality $\pscal{\bar M_j f}{g}_j = \pscal{f}{M_j g}_{j-1}$ follows easily from reversibility. Indeed, since $\bar M_j f\in \mathcal{L}_j^2$, we have
\begin{multline*}
\pscal{\bar M_j f}{g}_j = \int_\Theta \Pi_j(\rmd u) \int_\Theta M_j(u,\rmd v) g(u) f(v) \frac{\Pi_{j-1}(v)}{\Pi_j(v)} \\
= \int_\Theta \Pi_j(\rmd u) \int_\Theta M_j(u,\rmd v) g(v) f(u) \frac{\Pi_{j-1}(u)}{\Pi_j(u)} = \pscal{f}{M_j g}_{j-1}.
\end{multline*}
A similar argument as above shows that for $f\in\mathcal{L}_{j-1}^2$,
\[\int_\Theta \left(Q_j f(x)\right)^2\Pi_{j-1}(\rmd x) \leq \rho_j^2\Pi_{j-1}(f^2),\]
and
\[ \pscal{f}{Q_jh}_{j-1} = \int_\Theta \Pi_j(\rmd u) \int_\Theta M_j^2(u,\rmd v) f(u) h(v) \frac{\Pi_{j-1}(u)}{\Pi_j(u)} \frac{\Pi_{j-1}(v)}{\Pi_j(v)}  = \pscal{h}{Q_jf}_{j-1}.\]
The positivity is easily seen by noting that $Q_j$ is the product of $M_j$ and its adjoint $\bar M_j$.
\end{proof}

\section{CHECKING ASSUMPTION H\ref{AssumptionModeDepth}}\label{sec:checking:AssumptionModeDepth}
We first check that a Lyapunov drift condition implies Assumption \ref{AssumptionModeDepth}. Specifically, suppose that there exists a function $V:\Theta_j\to [1,\infty)$ and constants $m > 0$, $0 < \alpha \leq 1$, and $0 \leq \beta < \infty$ with following properties:
\begin{enumerate}
\item For $\theta\notin \Theta_j$, we have $V(\theta)\geq e^{m}$.
\item For all $\theta \in \Theta_{j}$ and all $L_{2} \leq i \leq L$, one has \[ M_iV(\theta) \leq (1- \alpha) V(\theta) + \beta.\]
\end{enumerate}
Then we can calculate that for all $\theta \in I_{1}$ and all $L_{2} \leq i \leq L$ we have
\[\PE_\theta\left(V(\theta^{(i)})\right) \leq \frac{\beta}{\alpha}.\]
Thus, by Markov's inequality,
\[\PP\left(X_{i} \notin \Theta_j\right) \leq \PP\left(V(X_{i}) > e^m\right) \leq  e^{-m}\PE\left(V(X_{i})\right) \leq \frac{\beta e^{-m}}{\alpha}.\]
Taking a union bound over $L_{2} \leq i \leq L$, we find that Assumption \ref{AssumptionModeDepth} holds with exponentially small constant $\delta_{1}$ under some mild assumption on the level $m$ of the drift function $V$.

We next note that this proof approach can be used to verify Assumption \ref{AssumptionModeDepth} for targets given in Equation \eqref{EqSimplyGaussMixture} of the main document for sufficiently small $\sigma$. Toward that we take sets $\Theta_{1} = [-1.5,-0.5]$ and $\Theta_{2} = [0.5,1.5]$ and time $\frac{L}{3} \leq L_{2} \leq \frac{2L}{3}$. For $\sigma, s , c > 0$, we consider the Metropolis-Hastings chain $K = K_{\sigma,s,c}$ with target $\mathcal{N}(0, s c^{2} \sigma^{2})$  and proposal $Q(\theta,\cdot) = Unif([\theta-s \sigma, \theta+ s\sigma])$. Denote by $a_{n}$ the acceptance function associated with this Metropolis-Hastings kernel. Finally, for $\alpha > 0$ we consider candidate Lyapunov functions of the form $V(\theta) = e^{\frac{\alpha}{\sigma}|\theta|}$. 

\begin{lemma} \label{LyapunovExplicit}
There exists $\epsilon > 0$ so that for all $s \in [-0.5,2]$ and all $0 < \sigma < \alpha < \epsilon$ sufficiently small, $V$ is a Lyapunov function satisfying:
\[
(KV)(\theta) \leq 0.7 V(\theta) + e^{2 s \alpha} \textbf{1}_{|\theta| \leq  s \sigma}.
\]
\end{lemma}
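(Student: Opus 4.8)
The plan is to treat this as a geometric-drift estimate for a random-walk Metropolis kernel with a symmetric proposal, exploiting the explicit Gaussian form of the target $\pi\propto\exp(-\theta^{2}/(2sc^{2}\sigma^{2}))$. Because the proposal $Q(\theta,\cdot)=\mathrm{Unif}([\theta-s\sigma,\theta+s\sigma])$ is symmetric, the Hastings ratio collapses to the density ratio, so the acceptance function is simply $a(\theta,\theta')=\min\{1,\pi(\theta')/\pi(\theta)\}$. Writing the kernel as
\[
(KV)(\theta)=\int Q(\theta,\mathrm{d}\theta')\,a(\theta,\theta')\,V(\theta')+V(\theta)\int Q(\theta,\mathrm{d}\theta')\big(1-a(\theta,\theta')\big),
\]
reduces everything to one-dimensional integrals over the step $u=\theta'-\theta\in[-s\sigma,s\sigma]$, and by the joint symmetry of $V$ and $\pi$ under $\theta\mapsto-\theta$ I would assume $\theta\ge 0$ throughout.

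First I would dispose of the near-mode region $|\theta|\le s\sigma$ giving rise to the additive term. There every proposal satisfies $|\theta'|\le|\theta|+s\sigma\le 2s\sigma$, hence $V(\theta')\le e^{(\alpha/\sigma)\cdot 2s\sigma}=e^{2s\alpha}$, so the crude bound $(KV)(\theta)\le e^{2s\alpha}$ holds; since $0.7\,V(\theta)\ge0$, the claimed inequality is then satisfied on this set with room to spare, and this is exactly the indicator term $e^{2s\alpha}\mathbf{1}_{|\theta|\le s\sigma}$.

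The substantive work is the tail $\theta>s\sigma$, where every proposal stays positive, so $|\theta'|=\theta+u$ and $V(\theta')/V(\theta)=e^{\alpha u/\sigma}$. I would split the step into inward moves ($u\in[-s\sigma,0]$, always accepted since they raise $\pi$), rejected outward moves, and accepted outward moves ($u\in(0,s\sigma]$, accepted with $a=\exp(-(2\theta u+u^{2})/(2sc^{2}\sigma^{2}))$). After the substitution $u=w\sigma$ and setting $t=\theta/\sigma>s$, the drift ratio reads
\[
\frac{(KV)(\theta)}{V(\theta)}=\frac{1}{2s}\left[\int_{-s}^{0}e^{\alpha w}\,\mathrm{d}w+\int_{0}^{s}a\,e^{\alpha w}\,\mathrm{d}w+\int_{0}^{s}\big(1-a\big)\,\mathrm{d}w\right],\qquad a=\exp\!\Big(-\tfrac{w(2t+w)}{2sc^{2}}\Big).
\]
The inward integral produces the genuine contraction $\tfrac{1}{2s}\int_{-s}^{0}e^{\alpha w}\,\mathrm{d}w=\tfrac{1-e^{-s\alpha}}{2s\alpha}<\tfrac12$, the rejected outward mass contributes at most $\tfrac12$, and the accepted outward mass $\tfrac{1}{2s}\int_0^s a\,(e^{\alpha w}-1)\,\mathrm{d}w$ is the only term that can inflate $V$. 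Since $a$ decreases in $t$, this last term is largest at the inner boundary $t\downarrow s$ and vanishes as $t\to\infty$, where the ratio approaches $\tfrac{1-e^{-s\alpha}}{2s\alpha}+\tfrac12$. The plan is to bound the bracket uniformly over $t>s$, $s$ in the admissible range, and $0.5<c<2$, using the elementary estimates $1-a\le w(2t+w)/(2sc^{2})$ and $e^{\alpha w}\le 1+\alpha w\,e^{s\alpha}$ to control the outward term and reduce the claim to a one-variable inequality in $t$.

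The main obstacle is precisely this tail contraction constant. The $t\to\infty$ limit alone shows that the bound hinges on $\tfrac{1-e^{-s\alpha}}{2s\alpha}+\tfrac12\le 0.7$, i.e.\ on $1-e^{-s\alpha}\le 0.4\,s\alpha$, so the contraction is a real constraint tying together the Lyapunov rate $\alpha$, the step scale $s$, and the acceptance geometry rather than a soft consequence of shrinking parameters; moreover the accepted-outward correction makes the finite-$t$ ratio \emph{larger} than its limit, so the genuine worst case is the inner boundary $t\approx s$, where outward acceptance is still appreciable. I would therefore concentrate the core of the argument on showing that the inward contraction together with rejection of outward moves dominates the accepted outward expansion uniformly at $t\approx s$ and over $(s,c)$, and on verifying that the admissible constraints on $(\sigma,\alpha,s,c)$ place us squarely in the region where the one-variable inequality holds; this is the step whose constant I would scrutinize most carefully, since the sign of the bound is sensitive to the size of $s\alpha$.
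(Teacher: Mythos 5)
Your setup, case split, and tail decomposition coincide with the paper's own proof: an exact computation of the Metropolis kernel acting on $V$, factoring out $V(\theta)$, with the centre region $|\theta|\leq s\sigma$ handled by the same crude bound $V(\Theta_1)\leq e^{2s\alpha}$. Your normalized identity for the tail is correct:
\[
\frac{(KV)(\theta)}{V(\theta)}=\frac{1-e^{-s\alpha}}{2s\alpha}+\frac{1}{2}+\frac{1}{2s}\int_{0}^{s}a(w)\bigl(e^{\alpha w}-1\bigr)\,\mathrm{d}w .
\]
The genuine gap is the step you defer at the end: the ``verification that the admissible constraints on $(\sigma,\alpha,s,c)$ place us in the region where the one-variable inequality holds'' cannot succeed, because the hypotheses place us in the opposite region. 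Since the accepted-outward term above is nonnegative, your identity yields the lower bound $(KV)(\theta)\geq\bigl(\tfrac{1}{2}+\tfrac{1-e^{-s\alpha}}{2s\alpha}\bigr)V(\theta)\geq\bigl(1-\tfrac{s\alpha}{4}\bigr)V(\theta)$ at every tail point (equivalently: $|\Theta_{1}-\theta|\leq s\sigma$ pointwise forces $(KV)(\theta)\geq e^{-s\alpha}V(\theta)$). Your necessary condition $1-e^{-s\alpha}\leq 0.4\,s\alpha$ holds only for $s\alpha\gtrsim 2.2$, whereas the lemma's hypotheses ($s\leq 2$, $0<\sigma<\alpha<\epsilon$ small) force $s\alpha\leq 2\epsilon$ to be arbitrarily small, in which case the drift ratio is at least $1-\tfrac{s\alpha}{4}>0.9$. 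So the claimed inequality fails at every tail point in the stated regime: no completion of your plan (or any other) can establish the lemma as stated; the true drift factor is $1-\Theta(\alpha)$, not $0.7$.

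For comparison: the paper's proof reaches $0.7$ only through a normalization slip that your bookkeeping avoids. There, $\frac{2}{s\sigma}\mathbb{E}[V(\Theta_{1})]$ should read $2s\sigma\,\mathbb{E}[V(\Theta_{1})]$, and the staying-mass term written as $\frac{1}{s\sigma}$ should be $s\sigma$; under that slip the inward-move integral $\frac{\sigma}{\alpha}(1-e^{-\alpha s})$ --- in reality comparable to $s\sigma$, i.e., precisely the $\approx\tfrac12$ contribution you identified --- gets absorbed as if it were negligible into ``$\frac{1.1}{s\sigma}$''. Your analysis is therefore the correct one, and pushed to its conclusion it shows the statement must be weakened (for instance, a drift constant of the form $1-c(s,c)\,\alpha$, or hypotheses ensuring $s\alpha$ is bounded below by roughly $2.2$) rather than proved.
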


\begin{remark}
Although we state the result for \textbf{exactly} Gaussian targets on the \textbf{full} real line, it applies with no substantial changes in the following situations:

\begin{enumerate}
\item \textbf{Restricted State Space:} If we change the target from the Gaussian $\mathcal{N}(0, s c^{2} \sigma^{2})$ to the restriction of this Gaussian to an interval $[A,B]$, the result applies with no changes to the constants as long as $A < - s \sigma$ and $B > s \sigma$. To verify this, note that rejecting proposals from within $[A,B]$ to points outside of $[A,B]$ can only decrease the value of $V$.
\item \textbf{Small Multiplicative Perturbations:} Denote by $f$ the density of $\mathcal{N}(0, s c^{2} \sigma^{2})$. If we target instead a distribution with unimodal density $g$ satisfying $\left| \frac{g(\theta)}{f(\theta)} - 1 \right| \leq \epsilon V(\theta)$, then the same Lyapunov inequality holds with the constant $0.7$ replaced by $(0.7 + \epsilon)$. 
\end{enumerate}

These observations can be combined to verify Lyapunov conditions for Gaussian mixture models restricted to regions where one mixture component dominates the density.
\end{remark}

\begin{proof}
Let $\theta \in \mathbb{R}$ and let $\Theta_{1} \sim K(\theta,\cdot)$. We consider three cases: $\theta < - s \sigma$, $\theta > s \sigma$, or $\theta \in [- s \sigma, s \sigma]$.

In the first case,
\begin{multline*}
\frac{2}{s \sigma} \mathbb{E}[V(\Theta_{1})] \\
= \int_{-s \sigma}^{0}V(\theta + x)a_{n}(\theta,\theta+x) dx + \int_{-s \sigma}^{0}V(\theta)(1-a_{n}(\theta,\theta+x)) dx + \int_{0}^{s \sigma} V(\theta + x)dx \\
=V(\theta) \left( \int_{-s \sigma}^{0} (V(x)-1) a_{n}(\theta,\theta+x) dx + \frac{1}{s \sigma} + \int_{0}^{s \sigma} V(x) dx \right) \\
= V(\theta) \left( \int_{-s \sigma}^{0} (V(x)-1) a_{n}(\theta,\theta+x) dx + \frac{1}{s \sigma} + \frac{\sigma}{\alpha}(1 - e^{-\alpha s}) \right). \\
\end{multline*}
For fixed $s$ and all $0 < \sigma < \alpha < 0.1$, we can bound the last term and continue:
\begin{multline*}
\frac{2}{s \sigma} \mathbb{E}[V(\Theta_{1})] \\
\leq V(\theta) \left( \frac{1.1}{s \sigma} + \int_{-s \sigma}^{0} (V(x)-1) a_{n}(\theta,\theta+x) dx  \right) \\ 
= V(\theta) \left( \frac{1.1}{s \sigma} + \int_{-s \sigma}^{0} (e^{\frac{\alpha}{\sigma}|x|}-1) e^{\frac{1}{2 s c^{2} \sigma^{2}}(\theta^{2} - (\theta + x)^{2}) } dx  \right) \\
= V(\theta) \left( \frac{1.1}{s \sigma} + \int_{-s \sigma}^{0} (e^{\frac{\alpha}{\sigma}|x|}-1) e^{-\frac{1}{2 s c^{2} \sigma^{2}}(2 \theta x + x^{2}) } dx  \right) \\
\leq V(\theta) \left( \frac{1.1}{s \sigma} + \int_{-s \sigma}^{0} e^{-\frac{\alpha}{\sigma}x-\frac{1}{2 s c^{2} \sigma^{2}}(2 \theta x + x^{2}) } dx  \right).
\end{multline*}

Investigating the two terms appearing inside the exponential, we note that for small $\alpha$, the second term dominates. Thus, for sufficiently small $0 < \sigma < \alpha < \epsilon$, we have 
\[ 
\frac{2}{s \sigma} \mathbb{E}[V(\Theta_{1})] \leq V(\theta)  \frac{1.2}{s \sigma}.
\] 
This proves the desired inequality in the first case.

In the second case, we simply note that $|\Theta_{1}| \leq |\theta| + s \sigma$, so 
\[ 
V(\Theta_{1}) \leq V(2 s \sigma) = e^{2 \alpha s} \leq 0.7 \, V(\theta) + e^{2 \alpha s}.
\] 
This proves the desired inequality in the second case.

The third case is essentially identical to the first case.
\end{proof}

\vfill

\end{document}